\pgfplotsset{ % Here we specify options for all figures in the document
  compat=newest, % Which version of pgfplots do we want to use?
  colormap={redblue}{ rgb255(0cm)=(128,0,0); rgb255(.4cm)=(255,0,0);rgb255(.8cm)=(255,255,0); rgb255(1.2cm)=(100,255,0);rgb255(3cm)=(0,255,255);rgb255(5cm)=(0,0,180)},
  legend style =
  {font=\small \sffamily},
  label style = {font=\small\sffamily},
every tick label/.append style={font=\small}}
\def\beq{\begin{equation}}
\def\eeq{\end{equation}}
\newtheorem{proposition}{Proposition}
\newtheorem{assumption}{Assumption}
\newcommand{\ba}{\begin{array}}
\newcommand{\ea}{\end{array}}
\newcommand{\be}{\begin{equation}}
\newcommand{\ee}{\end{equation}}
\newcommand{\mc}{\mathcal}
\newcommand{\1}{\mathbf{1}}%{\mathbbm{1}}
\newcommand{\0}{\mathbf{0}}
\newcommand{\N}{\mathbb{N}}
\newcommand{\tup}[1]{\textup{#1}}
\newcommand{\bs}[1]{\boldsymbol{#1}}
\DeclareMathOperator*{\argmin}{argmin}
\def\N{\mathbb{N}}
\def\diag{{\rm diag}\,}
\def\col{{\rm col}\,}
\newacronym{SIS}{SIS}{Susceptible--Infected--Susceptible}
\newacronym{NPI}{NPIs}{nonpharmaceutical interventions}
\newacronym{NLP}{NLP}{Nonlinear Programming}
\newacronym{NLMPC}{NL-MPC}{Nonlinear Model Predictive Control}
\newacronym{MPC}{MPC}{Model Predictive Control}
\newacronym{SQP}{SQP}{Sequential Quadratic Programming}
\newacronym{QP}{QP}{Quadratic Programming}
\title{\LARGE \bf
Optimal policy design to mitigate epidemics\\ on networks using an SIS model}
\author{Carlo Cenedese, Lorenzo Zino, Michele Cucuzzella, Ming Cao
\thanks{C. Cenedese is with the Department of Information Technology and Electrical Engineering, ETH Zürich, Zurich, Switzerland. L. Zino, M. Cao and M. Cucuzzella  are with the Faculty of Science and Engineering, University of Groningen, Groningen, the Netherlands.  M. Cucuzzella is also with the Department of Electrical, Computer and Biomedical Engineering, University of Pavia, Pavia, Italy.
        E-mails: {\tt\small ccenedese@ethz.ch, \{lorenzo.zino, m.cao\}@rug.nl, michele.cucuzzella@unipv.it}.}%
\thanks{The work by L. Zino,  and M. Cao is partially supported by the European Research Council (ERC-CoG-771687) and the Netherlands Organization for Scientific Research (NWO-vidi-14134). The work by C. Cenedese is supported by SNSF under NCCR Automation.}%
}
\begin{document}

\maketitle

%%%%%%%%%%%%%%%%%%%%%%%%%%%%%%%%%%%%%%%%%%%%%%%%%%%%%%%%%%%%%%%%%%%%%%%%%%%%%%%%
\begin{abstract}
Understanding how to effectively control an epidemic spreading on a network is a problem of paramount importance for the scientific community. The ongoing COVID-19 pandemic has highlighted the need for policies that mitigate the spread, without relying on pharmaceutical interventions, that is, without the medical assurance of the recovery process. These policies typically entail lockdowns and mobility restrictions, having thus nonnegligible socio-economic consequences for the population. In this paper, we focus on the problem of finding the optimum policies that ``flatten the epidemic curve" while limiting the negative consequences for the society, and formulate it as a nonlinear control problem over a finite prediction horizon. We utilize the model predictive control theory to design a strategy to effectively control the disease, balancing safety and normalcy. An explicit formalization of the control scheme is provided for the susceptible--infected--susceptible epidemic model over a network. Its performance and flexibility are demonstrated by means of  numerical simulations.
\end{abstract}

%%%%%%%%%%%%%%%%%%%%%%%%%%%%%%%%%%%%%%%%%%%%%%%%%%%%%%%%%%%%%%%%%%%%%%%%%%%%%%%%
\section{Introduction}\label{sec:intro}

The ongoing COVID-19 pandemic has highlighted the key role played by public health authorities in enacting \gls{NPI} to ``flatten the epidemic curve" when no effective pharmaceutical treatments such as vaccines are available~\cite{Haug2020,prem2020effect}. However, \gls{NPI} typically entail the implementation of harsh measures, including lockdowns and restrictions of personal freedom of movement, which may yield severe socio-psychological and economic consequences~\cite{Bonaccorsi2020economic}. Thus, they should be implemented keeping a reasonable balance between safety and normalcy. To this aim, the development of accurate tools to predict the course of an epidemic and evaluate the impact of different interventions has become a task of paramount importance for the scientific community, aiming at assisting public health authorities in their decisions. 

The mathematical modeling of epidemics has emerged as a valuable framework to perform such a task~\cite{Nowzari2016,Mei2017,Pare2020review,zino:2021:epidemics_survey,ye:cenedese:2020:epidemics}. Relevant examples can be found in the useful insights provided into the ongoing COVID-19 pandemic~\cite{Giordano2020,Casella2020,Calafiore2020,Gatto2020,dellarossa2020,Carli2020,Parino2021,kohler:2020:robust_MPC_COVID,Bin2021,Parino2021vaccine}. Within this framework, network models have become popular as they allow us to capture the relation between human mobility and the spatial spread of a disease. %~\cite{Nowzari2016,Mei2017,Pare2020review}. 
Of particular interest is the problem of understanding how to effectively control the spread of an epidemic disease on a network by acting on the nodal dynamics and on the network structure.

In the literature, such problems have often been addressed assuming limited changes in the network structures, that is, by studying how to re-arrange the network structure or distribute antidote in order to increase the population's resistance against a possible epidemic. Important results have been found by using geometric programming~\cite{Preciado2014}; distributed algorithms have been recently proposed to address this problem~\cite{ramirez2018,Mai2018}. However, the ongoing health crisis has highlighted the importance of having control schemes, which take into account the dynamic evolution of the epidemic process, and can thus be updated online, as the outbreak evolves. While a considerable body of literature has been proposed to study dynamical control strategies for vaccination campaigns and antidote distribution using optimal control theory~\cite{Eshghi2016,Grandits2019} and \gls{MPC}~\cite{selley2015,Kohler2018,Watkins2020}, limited results are available in the absence of effective treatments, that is, when the control action has to focus on the contagion mechanisms rather than on the recovery mechanism. Recently, motivated by the ongoing COVID-19 pandemic, some efforts have been devoted to bridging this gap by proposing feedback control interventions and leveraging \gls{MPC}~\cite{dellarossa2020,Carli2020,kohler:2020:robust_MPC_COVID,Parino2021vaccine}.

Here, inspired by these works, we propose an optimal control approach in order to address the problem of mitigating an epidemic process spreading by means of regional policies that entail activity reductions and targeted mobility restrictions. This holistic approach to \gls{NPI} constitutes a key contribution of this paper and differs from other works in the literature that typically focus on optimizing a specific intervention policy (e.g., social distancing in~\cite{MORATO2020417}) only. Specifically, we consider a discrete-time deterministic \gls{SIS} epidemic model on a network in which two types of control actions are included to mitigate the contagion: actions to reduce the social activity in some regions of the network (modeling, for instance, lockdown measures, closures of activities, and curfews), and policies to limit or ban travels between specific nodes (e.g. cities, provinces, regions, countries). Then, an optimal control problem is formulated to find a strategy that mitigates the spread of the disease in the network, limiting the negative consequences of \gls{NPI}. The proposed optimal control problem takes into account important features, such as the balance between safety and normalcy, the need of keeping the epidemics under control, and the increase of  socio-economic costs associated to the implementation of \gls{NPI}, caused by accumulating factors. Different from many control schemes proposed in the literature (see for example  \cite{Nowzari2016,zino:2021:epidemics_survey} for an overview), our formalization does not (necessarily) have the objective to eradicate the disease (which may be extremely costly ---and practically unfeasible--- using only \gls{NPI}). In contrast, it allows the controller to set an acceptable prevalence of the disease (which may depend on the hospital capacity and may vary across the nodes), providing thus a framework that might realistically be adopted to assist public health authorities in their decision toward mitigating epidemic outbreaks. Finally, we propose several simulations in which we highlight the performance of the proposed control strategy, the benefits and disadvantages of farsighted policies in comparison to myopic ones, and perform a sensitivity analysis on how the objectives of the policy maker affect the final solution. 

The rest of the paper is organized as follows. In Section~\ref{sec:problem_formulation}, we present the SIS epidemic model. In Section~\ref{sec:policies}, we discuss the control policies. In Section~\ref{sec:opt_policy}, we propose our optimal control policy. In Section~\ref{sec:simulations}, we illustrate the results of our numerical simulations. Section~\ref{sec:conclusions} concludes the paper and outlines the avenues of future research.

\subsection{Notation}

We gather here the notation used throughout the paper. The set of nonnegative integer numbers is denoted by $\N$, and the set of nonnegative real numbers is denoted by $\mathbb{R}_+$. Given a vector $\bs x$, we denote by $\bs x^\top$ its transpose and, for a positive definite matrix $S\succ 0$, we denote by $\lVert \bs x\rVert_{S}:=\sqrt{\bs x^\top S \bs x}$ the norm of $\bs x$ weighted by $S$. With $\0$ and $\1$, we denote the all-$0$ and all-$1$ vectors, respectively. Given a matrix $S$, we denote by $\sigma(S)$ its maximum singular value. $S_{ij}$  and $S_i$ denote the element in the $i$-th row and $j$-th column and the $i$-th row of $S$, respectively. 
We use the following notation $\col((x_i)_{i\in[1,\dots,N]}):=[x_1,\dots,x_N]^\top$. %QUI AGGIUNGIAMO ALTRE NOTAZIONI, COL?

\section{\gls{SIS} epidemics model over networks}
\label{sec:problem_formulation}
In this section, we introduce the \gls{SIS} epidemic model, illustrate its main properties, and  discuss the assumptions needed to formulate the control problem. 

We consider a discrete-time deterministic \gls{SIS} epidemic model on a network~\cite{Pare2020review}. The network structure arises from the local interactions among several nodes (communities), which represent geographical entities, such as countries, regions, or even cities, thanks to the high flexibility of the model. We assume there are $N$ communities interconnected by $P$ links, which can represent for instance roads, 
air routes or even simple geographic adjacencies. In general, a link between two communities means that there is a flow of people between them. The network is hereafter formalized via an undirected and connected graph $\mc G$, where the communities correspond to the set $\mc V\coloneqq \{1,\dots ,N \}$ of the nodes. A connection between two communities $i,j\in\mc V$ is denoted via an edge $e_\ell$ connecting the two corresponding nodes,  defined as the unordered couple $e_\ell\coloneqq\{i,j\}$. We assume that all the self-loops $\{i,i\}$, $i\in \mc V$ are present. The edge set is the collection of all the edges of the graph, i.e., $\mc E\coloneqq \{e_1,\dots ,e_P\}$. Therefore, the graph is defined as $\mc G \coloneqq (\mc V,\mc E)$. 
The fraction of infected individuals in community $i\in\mc V$ at time $t\in\mathbb{N}$ is denoted by $x_i(t)\in[0,1]$. This quantity describes the temporal evolution of the health  state of community $i$. We assume that the health state evolves according to the dynamics of a discrete-time \gls{SIS} model~\cite{Pare2020review}, i.e., for all time-steps $t\in\mathbb{N}$, we have
\smallskip
\begin{equation}
\label{eq:SIS_dynamics}
x_i^+ = (1-\mu) x_i + (1-x_i)  \bar \beta_i \sum_{j\in \mc V}  \bar A_{ij}x_j\:,
\end{equation}
where $x_i^+:=x_i(t+1)$ and $x_i:=x_i(t)$.    

%\begin{figure}
%\begin{tikzpicture} \node[draw=red, fill=red!10,circle, ultra thick,,minimum size=0.75cm] (I1) at (3.5,1.5) {\large $I$};
%\node[draw=green, fill=green!10,circle, ultra thick,,minimum size=0.75cm]  (S1) at (1.3,1.50) {\large $S$};
%\path [->,>=latex,ultra thick]  (S1) edge[bend left =30]   node [above] {{$\lambda$}} (I1);
%\path [->,>=latex,ultra thick]  (I1) edge[bend left =30]   node [above] {{$\mu$}} (S1);
%\end{tikzpicture} \label{fig:SIS}
%    \caption{Schematic of the SIS model.}
%    \label{fig:sis}
%\end{figure}

Next, we discuss the role and physical interpretation of each parameter appearing in \eqref{eq:SIS_dynamics}.
\subsubsection{Recovery rate $\mu\in[0,1]$} It is the rate at which the individuals manage to recover from the disease. Here, we assume that the recovery rate is constant across the entire population and cannot be increased by the controller, capturing those epidemics for which a cure is not yet developed, e.g., the early spread of COVID-19.
\begin{assumption}[Positive recovery rate]
\label{ass:positive_recovery_rate}
For all $i\in\mc V$, the recovery rate $\mu>0$ is constant and strictly positive.\hfill\QEDopen
\end{assumption}\smallskip

\subsubsection{Infection rate $\overline{\beta}_i>0$} This represents the rate at which the individuals in community $i\in\mc V$ become infected when they get in touch with others. The higher this value is, the easier people become infected. Such a value can differ among the communities, e.g., due to the implementation of different NPIs. Infection rates are gathered in the $n$-dimensional vector $\bar\beta$. %Intuitively, in densely populated areas and rural areas the speed at which the disease  spread is very different, thus this is a local quantity.
The trajectory of $\bs{x}(t)\coloneqq \col((x_i(t))_{i\in\mc V})$ can naturally have two possible behaviors, depending on the model parameters. Either it converges to the disease-free equilibrium, i.e., $\bs{x}(\infty)=\0$, or the disease-free equilibrium becomes unstable, and the trajectory converges to a (unique) endemic equilibrium with  $x_i(\infty)>0$, for all $i\in\mathcal V$~\cite{Pare2020,Liu2020}. The threshold between these two regimes depends on whether $\sigma(\diag(\bar\beta)\bar A)/\mu$ is smaller or greater than $1$. A simpler---network-independent--- sufficient (but nonnecessary) condition for the trajectory to converge to an endemic equilibrium can be established by requiring that $\overline \beta_i/\mu > 1$ for all $i\in \mc V$. In this work, we focus on the case in which the disease does not  die out naturally, and thus \gls{NPI} have to be put in place toward mitigating the epidemic outbreak. Hence, we make the following assumption. 
\begin{assumption}[Disease spreading]
\label{ass:disease spreading}
For all $i\in\mc V$, the infection rate satisfies $\overline \beta_i > \mu $.\hfill\QEDopen
\end{assumption}

\subsubsection{Communities interaction $\overline A_{i,j}\in[0,1]$} People in $j\in\mc V$ can move to the neighboring communities, denoted by $\mc N_j\coloneqq\{\ell\in\mc V\,:\, \{j,\ell\}\in\mc E \}$, and interact with people there. The population that flows from $j$ to $i\in\mc N_j$ influences the health state of community $i$, i.e., $x_i$. This is modeled in \eqref{eq:SIS_dynamics} via the weighted adjacency matrix $\overline A\coloneqq [\overline A_{ij}]_{i,j\in\mc V}\in\mathbb{R}_+^{N\times N}$, where $\overline{A}_{ij}$ is the weight associated to the edge $(i,j)$ of $\mc G$.
%by means of   $\overline{A}_{ij}$ that is an element of the weighted adjacency matrix $\overline A\coloneqq [\overline A_{ij}]_{i,j\in\mc V}\in\R_+^{N\times N}$ associated to $\mc G$, where $A_{ij}\neq 0\iff \{i,j\}\in\mathcal E$. 
The diagonals of $\overline A$ represent the part of population that remains in the same community. We now  introduce the following assumption on this matrix.
\smallskip
\begin{assumption}[Stochastic and positive diagonals]
\label{ass:doubly_stoch_A_bar}
The weighted adjacency matrix $\overline A$ associated with the undirected graph $\mc G$ is  stochastic, i.e., $\overline A\mathbf{1}=\mathbf{1}$, and with strictly positive diagonal entries, i.e., $\overline A_{ii}>0$, for all $i\in\mathcal V$.~\hfill\QEDopen
\end{assumption} 
\smallskip
In the following section, we discuss the control policies.

\section{Control policies}
\label{sec:policies}
To mitigate the spread of a disease, a policy maker can apply \textit{endogenous} or \textit{exogenous} \gls{NPI} to each local community.  The former category includes all those measures put in place inside the local community $i$, e.g., lock-downs, usage of face masks or encouraging social distancing.
The latter instead concerns those measures that limit the inflow of people from neighboring communities, i.e., implementation of travel bans or requesting quarantine upon entrance. 
It is clear that these policies directly affect the dynamics in \eqref{eq:SIS_dynamics}. Specifically,  for each community $i$,  the  endogenous measures reduce the infection rate $\overline\beta_i$, while  the exogenous ones act on $\overline A$. The static values of $\overline\beta_i$ and $\overline A$ refer to the \textit{uncontrolled} evolution of the epidemic, while the intervention of the policy makers transforms these parameters into time-varying functions $\beta_i(t)$ and $A(t)$, which can be seen as inputs to be designed in a (possibly) optimal way to control the epidemic spreading. Thus, the dynamics of the \textit{controlled} evolution of the SIS epidemic become
\smallskip
\begin{equation}
\label{eq:SIS_dynamics_controlled}
x_i^+ = (1-\mu) x_i + (1-x_i) \beta_i(t) \sum_{j=1}^N  A_{ij}(t)x_j\:,
\end{equation}   
as illustrated in the schematic in Fig.~\ref{fig:network_sis}.

 \begin{figure}
  \centering
\begin {tikzpicture}
\tikzstyle{peers}=[draw,circle, text=black,  fill=white,inner sep=0pt, minimum size=.6cm, thick]

\node[peers] (0) at (-3.5,-.75) {$x_i$};
\node[rounded corners,fill=black!10] (1) at (0,0) {\small interactions};
\node[peers] (2) at (1.2,1.2) {$x_h$};
\node[peers] (3) at (1,-1.4) {$x_k$};
\node[peers] (4) at (2,0) {$x_j$};

\node[rounded corners,fill=black!10] (R) at (-1.5,-1.5) {\small recovery};

\path (0) edge[thick,->] node[below] {\small $\mu$} (R) ;
\path (0) edge[thick,<-] node[above] {\small $\beta_i(t)$} (1) ;
\path (1) edge[thick,<-] node[above] {\small $A_{ij}(t)$} (4) ;
\path (1) edge[thick, <-] node[left] {\small $A_{ih}(t)$}(2);
\path (1) edge[thick, <-] node[left] {\small $A_{ik}(t)$}(3);

\end{tikzpicture}
\caption{Schematic of the controlled network SIS model. The time-varying weighted adjacency matrix $A(t)$ determines the interaction that community $i$ has with other communities. The time-varying infection rate $\beta_i(t)$ determines the new infections, while the constant recovery rate $\mu$ governs the recovery process. }
\label{fig:network_sis}
\end{figure}
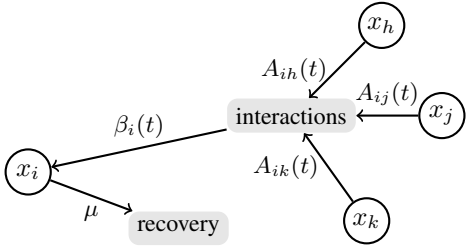

If the policies put in place are effective, the value of the infection rate should decrease, i.e., $\beta_i(t)\in[0, \overline\beta_i]$ for all $t\in\mathbb{N}$. We consider policies that affect the uncontrolled coefficients  in \eqref{eq:SIS_dynamics} linearly. So, we define time-varying functions $\beta_i(t)$ and $A(t)$ in \eqref{eq:SIS_dynamics_controlled} as 
\smallskip
\begin{subequations}
\label{eq:policies}
\begin{equation}
\label{eq:policy_beta}
\beta_i(t) \coloneqq \overline{\beta}_i - v_i(t) \,,\: \forall i \in\mc V \,,\\
\end{equation}
\begin{equation}\label{eq:policy_A}
A_{ij}(t) \coloneqq \overline{A}_{ij}-W_{ij}(t)\,,\: \forall i,j \in\mc V\,,
\end{equation}
\end{subequations}
where $v_i(t)\in[0,\ \overline{\beta}_i]$. Furthermore, it is reasonable to assume that the policy cannot implement new connections among the community, this translates into the following set of constraints on  $W(t)\coloneqq[W_{ij}(t)]\in \mathbb{R}^{N\times N}$
\begin{subequations}
\label{eq:cnd_W}
\begin{equation}
\label{eq:cnd_1_W}
W(t)\mathbf{1} = \0 \,,
\end{equation}
\begin{equation}
\label{eq:cnd_2_W}
\overline{A}_{ij}-W_{ij}(t) \geq 0\,,
\end{equation}
\begin{equation}
\label{eq:cnd_3_W}
\overline{A}_{ij}=0\implies W_{ij}(t)= 0\,.
\end{equation}
\end{subequations}
 
%
%\smallskip
%\begin{assumption}
%\label{ass:linear_policies}
%The time-varying functions $\beta_i(t)$ and $A(t)$ in \eqref{eq:SIS_dynamics_controlled} are defined as 
%\smallskip
%\begin{subequations}
%\label{eq:policies}
%\begin{equation}
%\label{eq:policy_beta}
%\beta_i(t) \coloneqq \overline{\beta}_i - v_i(t) \,,\: \forall i \in\mc V \,,\\
%\end{equation}
%\begin{equation}\label{eq:policy_A}
%A_{ij}(t) \coloneqq \overline{A}_{ij}-W_{ij}(t)\,,\: \forall i,j \in\mc V\,,
%\end{equation}
%\end{subequations}
%where $v_i(t)\in[0,\ \overline{\beta}_i]$ and, for all $t\in\mathbb{N}$,  $W(t)\coloneqq[W_{ij}(t)]\in \mathbb{R}^{N\times N}$ has to satisfy
%\begin{subequations}
%\label{eq:cnd_W}
%\begin{equation}
%\label{eq:cnd_1_W}
%W(t)\mathbf{1} = \0 \,,
%\end{equation}
%\begin{equation}
%\label{eq:cnd_2_W}
%\overline{A}_{ij}-W_{ij}(t) \geq 0\,,
%\end{equation}
%\begin{equation}
%\label{eq:cnd_3_W}
%\overline{A}_{ij}=0\implies W_{ij}(t)= 0\,.
%\end{equation}
%\end{subequations}
%\hfill\QEDopen
%\end{assumption}

%It is important to note that \eqref{eq:policy_A} together with \eqref{eq:cnd_W} entails only the row-stochasticity of  $A(t)$ for every $t\in\N$.
%This is a consequence of the implicit assumption that travel bans might be not symmetric among communities.
 
%It is important to note that \eqref{eq:policy_A} together with \eqref{eq:cnd_W} entails that the graph $\mc G_t$ associated to $A(t)$ is directed and strongly connected. This is a consequence of the implicit assumption that travel bans might be not symmetric among communities. 
From \eqref{eq:policy_A} and \eqref{eq:cnd_W}, it follows that the time-varying weighted adjacency matrix $A(t)$ defines a directed subgraph $\mc G(t)\subseteq\mc G$, which is composed of the same nodes and a subset of the edges of $\mc G$, i.e., $\mc G(t)\coloneqq (\mc V, \mc E(t))$ with $\mc E(t)\subseteq \mc E$, so it may not be connected at every $t\in\N$. 

\section{Optimal control policy}
\label{sec:opt_policy}

In this section, we present the main result of this paper on optimal control policies. 

\subsection{Problem formulation}
The adoption of stringent policies carries costs including both monetary (e.g. recession) and social (e.g. personal restrictions) aspects~\cite{Bonaccorsi2020economic}. In this section, we formalize the problem of choosing the optimal \gls{NPI} that minimize the cost while keeping under control the epidemic. Therefore, at each time instant $t\in\N$, we have to design the values of the control actions $W(t)$ and $v_i(t)$ for all time instants $k$ in the prediction horizon $\mc T \coloneqq\{t,t+1,\dots,t+T_{\tup h}\}$, the length of which ($T_h\in\N$) may vary due to the specific epidemic. In the remainder, we use the index $k$ when we refer to an instant belonging to $\mc T$ and $t$ otherwise. 
 
Each population $i\in\mc V$ is assumed to establish a desired trajectory $\hat x_i(k)$, for all $k\in\mc T\setminus \{t\}$, that is   the fraction of infected population $x_i(k)$  considered to be acceptable at time $k$. An interesting case is the one in which the value of $\hat x_i(k)$ is greater than $0$ at the beginning of $\mc T$, and it decreases over time, i.e., $\bs{\hat x}(k)\geq  \bs{\hat x}(k+1)$ for all $k,k+1\in\mc T\setminus \{t\}$. Its slope suggests how aggressive the desired community's policies should be. Notice that the terminal value $\hat x_i(t+T_{\tup h}+1)$ is not necessarily constrained to be $0$. In fact, it is reasonable that some communities accept a small fraction of infected in exchange for relaxed \gls{NPI}.  

Next, let us define the vector of all policies put in place at time instant $t$ by agent $i$ as the $(N+1)$-dimensional vector 
$\bs{u}_i(t)\coloneqq \col(v_i(t), W_{i1}(t),\dots ,W_{iN}(t))$.
The cost for each community $i$ consists of two antagonizing components. The first is the \emph{health-care cost} $c^{\tup{HC}}_i(x_i(t);\hat x_i(t))$ due to the presence of more infected than the desired ones $\hat x_i(t)$, which is individually decided by each community and communicated to the health authority. If the optimal policies leads to $ x_i(t)\leq \hat x_i(t)$, then the cost is $0$; otherwise it is assumed quadratic in the difference with respect to the desired value, and thus it reads as
$$c^{\tup{HC}}_i(x_i(t);\hat x_i(t))\coloneqq q_i(t)\left(\max\left\{0,x_i(t)- \hat x_i(t)\right\}\right)^2, $$
with $q_i(t)>0$ being a (possibly time-varying) weight.  On the other hand, we consider a (quadratic) \emph{control cost} associated with the implementation of control policies. So, the global cost that the $N$ communities face over the prediction horizon reads as
\smallskip
\begin{equation*}
\label{eq:cost_global}
J(\bs x,  \bs u ) \coloneqq \sum_{k\in\mc T } \sum_{i\in\mc V}  c^{\tup{HC}}_i(x_i(k+1); \hat x_i(k+1)) +  \lVert \bs{u}_i(k)\rVert^2_{S_i(t)}\,,
\end{equation*}
where  $S_i(k)\succ 0$ is the diagonal matrix of the $N+1$  weights associated to the \gls{NPI} $\bs{u}_i(k)$. The fraction of infected and the  adopted policies of all the populations at $t$ are $\bs{x}(t)$ and $\bs{u}(t)\coloneqq\col((\bs{u}_i(t))_{i\in\mc V})$, respectively. The pair $\bs x=\col((\bs{x}(t+1))_{t\in\mc T})$ and $  \bs u=\col((u(t))_{t\in\mc T})$ denotes in compact form  all the variables involved. In our general formulation, the weights associated to the health-care cost (i.e., $q_i(t)$) and to the control cost (i.e., $S_i(t)$) are time-varying, since the same fraction of infected individuals or the same level of \gls{NPI} may yield different costs depending on the timing, for example due to the increasing hospital preparedness or the accumulation of socio-economic costs. It is worth noticing that only $2P-2N$ values of $W(k)$ can be freely designed at each time instant $k\in\mc T$. In fact, for each one of the $P-N$ off-diagonal edges, we can set $W_{ij}(k)$ and $W_{ji}(k)$; then the weight associated with the self-loop is constrained by \eqref{eq:cnd_1_W}. Thus, the number of elements of $u(t)$ to be optimized is $2P-N$.

Finally, we cast the problem of designing the  optimal policies $\bs u^*$ for the control dynamics \eqref{eq:SIS_dynamics_controlled} 
%\begin{equation}
%\label{eq:opt_problem}\tag{$\mc P$}
%\begin{array}{rl}
%(\bs{x^*},\bs{u^*})=\argmin & J(\bs x,  \bs u )\\\
%\text{s.t.} &  v_i(t)\in[0,\overline \beta_i],\, \forall i\in\mc V, \\&  \eqref{eq:SIS_dynamics_controlled},  \\&
%\eqref{eq:cnd_W}.
%\end{array}
%\end{equation}
\begin{equation}\begin{array}{lll}
\label{eq:opt_problem}\tag{$\mc P$}
(\bs{x^*},\bs{u^*})&=&\argmin  J(\bs x,  \bs u )\,,\:\\&& \text{s.t. } (\bs x,  \bs u )\in\Omega\,,
\end{array}\end{equation} 
where
\begin{equation*}
\Omega \coloneqq \left\{(\bs x,\bs u) |\, v_i(t)\in[0,\overline \beta_i],\, \eqref{eq:SIS_dynamics_controlled} ,\,\eqref{eq:policies},\,
\eqref{eq:cnd_W} \text{ hold } \forall k\in\mc T \right\}. 
\end{equation*}
This problem belongs to the class of \gls{NLMPC}, due to the highly nonlinear and nonconvex controlled \gls{SIS} dynamics in \eqref{eq:SIS_dynamics_controlled}. It is well known that these problems are hard to solve in their original form. Nevertheless, we establish  that a  feasible solution to \eqref{eq:opt_problem} always exists, and thus  that the problem is worth to be studied.
\smallskip
\begin{proposition}[Solution existence]
\label{prop:solution_existence}
For every initial condition $x(t)\in [0,1]^N$, there exists at least one pair $(\bs{x^*},\bs{u^*})$ that is a solution to \eqref{eq:opt_problem}.\hfill \QEDopen
\end{proposition}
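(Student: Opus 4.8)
The plan is to establish existence via the Weierstrass extreme value theorem: I will show that the objective $J$ is continuous and that the feasible set $\Omega$ is nonempty and compact, so that the minimum is attained and a minimizing pair $(\bs{x^*},\bs{u^*})$ exists. The first conceptual step is to notice that the state is not a genuinely independent variable. Given the initial condition $\bs x(t)$ and any admissible control sequence $\bs u=\col((\bs u(k))_{k\in\mc T})$, the controlled dynamics \eqref{eq:SIS_dynamics_controlled} together with \eqref{eq:policies} define a forward recursion that determines $\bs x(k)$ uniquely for every $k\in\mc T$. Denoting by $\Phi$ the resulting solution map $\bs u\mapsto\bs x$, I would identify $\Omega$ with the graph $\{(\Phi(\bs u),\bs u):\bs u\in U\}$, where $U$ is the set of controls satisfying $v_i(k)\in[0,\overline\beta_i]$ together with \eqref{eq:cnd_W} for all $k\in\mc T$. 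It then suffices to study $U$ and $\Phi$ separately.

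Next I would argue that $U$ is nonempty and compact. Nonemptiness is immediate, since the no-intervention policy $v_i(k)=0$, $W(k)=\0$ satisfies every constraint in \eqref{eq:cnd_W}. Since $U$ is a finite product over $k\in\mc T$ of per-step admissible sets, it is enough to treat one step. Each such set is cut out by the closed conditions $v_i\in[0,\overline\beta_i]$, the linear equality $W\1=\0$, the inequalities $\overline A_{ij}-W_{ij}\geq 0$, and the support condition $\overline A_{ij}=0\Rightarrow W_{ij}=0$; all of these are closed, so $U$ is closed. Boundedness of the $v$-component is trivial; for $W$ I would combine \eqref{eq:policy_A}, the constraint \eqref{eq:cnd_1_W}, and Assumption~\ref{ass:doubly_stoch_A_bar} to conclude that $A(k)=\overline A-W(k)$ is row-stochastic with nonnegative entries, whence $A_{ij}(k)\in[0,1]$ and therefore $W_{ij}(k)=\overline A_{ij}-A_{ij}(k)$ is confined to a bounded interval. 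Thus $U$ is closed and bounded, hence compact.

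Then I would verify continuity and assemble the pieces. The one-step map defined by \eqref{eq:SIS_dynamics_controlled}–\eqref{eq:policies} is a polynomial in $(\bs x,v,W)$, so its finite composition $\Phi$ over the horizon of length $T_{\tup h}$ is continuous; consequently $\Omega$ is the image of the compact set $U$ under the continuous map $\bs u\mapsto(\Phi(\bs u),\bs u)$ and is therefore compact (in particular the states stay bounded over the finite horizon, so no separate invariance argument is required). The cost $J$ is a finite sum of the health-care terms $c^{\tup{HC}}_i$, each continuous because $x\mapsto\max\{0,x-\hat x_i\}$ is continuous and squaring preserves continuity, and of the quadratic control penalties $\lVert\bs u_i(k)\rVert^2_{S_i(k)}$, which are continuous as well. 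Hence $J$ is continuous on $\Omega$, and Weierstrass' theorem for a continuous function on a nonempty compact set delivers the desired minimizer.

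I expect the only non-routine point to be the boundedness of the admissible $W(k)$: unlike $v_i$, it carries no explicit box constraint, and the argument hinges on using the stochasticity constraint \eqref{eq:cnd_1_W} together with Assumption~\ref{ass:doubly_stoch_A_bar} to show that $A(k)$ remains row-stochastic, which in turn traps $W(k)$ in a compact set. Everything else is a direct application of continuity and compactness.
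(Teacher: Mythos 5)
Your proof is correct, and it reaches the Weierstrass conclusion by a genuinely different decomposition than the paper's. The paper works with the pair $(\bs x,\bs u)$ jointly: it exhibits $\bs u=\0$ to show $\Omega\neq\varnothing$, argues closedness of $\Omega$ as an intersection of closed constraint sets, and obtains boundedness of the state component by invoking the positive invariance of $[0,1]^N$ under the dynamics \eqref{eq:SIS_dynamics_controlled}, citing~\cite[Lem.~1]{Pare2020}. You instead eliminate the state through the solution map $\Phi$ and identify $\Omega$ with the graph $\{(\Phi(\bs u),\bs u):\bs u\in U\}$, so that compactness of $\Omega$ follows from compactness of $U$ together with continuity of the finite polynomial recursion --- no invariance lemma or external citation is needed, at the mild price of first noting that \eqref{eq:SIS_dynamics_controlled} determines $\bs x$ uniquely from $\bs u$ and $x(t)$, which makes the graph identification legitimate. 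Your route also buys a sharper treatment of a detail the paper glosses over: the paper attributes boundedness to ``the box constraints on the control inputs,'' yet $W(k)$ carries no explicit box constraint, and your observation that \eqref{eq:cnd_1_W}, \eqref{eq:cnd_2_W}, and Assumption~\ref{ass:doubly_stoch_A_bar} force $A(k)=\overline A-W(k)$ to be nonnegative and row-stochastic, hence $A_{ij}(k)\in[0,1]$ and $W_{ij}(k)\in[\overline A_{ij}-1,\overline A_{ij}]$, fills that gap rigorously. What the paper's approach buys in exchange is brevity and modularity: treating $\Omega$ directly as a closed, bounded subset of the joint space avoids formalizing the solution map, and the invariance lemma additionally certifies $\bs x(k)\in[0,1]^N$ (not merely bounded), which is the physically meaningful statement for fractions of infected; your argument, as stated, only yields boundedness of the states, which suffices for Weierstrass but is slightly weaker information.
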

\begin{proof}
From the constraints' definition \eqref{eq:SIS_dynamics_controlled} and 
\eqref{eq:cnd_W}, it follows  that the pair $\bs u=\0$ and $\bs x$, obtained via \eqref{eq:SIS_dynamics}, always lies in $\Omega$, hence $\Omega \neq \varnothing$. The box constraints on the control inputs  and the fact that the dynamics \eqref{eq:SIS_dynamics_controlled} are positively invariant with respect to $[0,1]^N$, see~\cite[Lem.~1]{Pare2020}, implying that $\Omega$ is bounded. For every constraint, the set of points satisfying it is closed. Therefore, $\Omega$ is the intersection of closed sets, and  hence it is closed itself. Since $J$ is continuous, we  invoke the Weierstrass theorem to conclude the existence of $(\bs{x^*},\bs{u^*})\in\Omega$ that minimizes $J$, and, in turns, solves \eqref{eq:opt_problem}. 
\end{proof}
\smallskip

\subsection{\gls{NLMPC} solution algorithm}
In the literature, there are several approaches to solve  \gls{NLMPC}. Specifically, in order to solve nonlinear constrained optimization problems with differentiable cost functions, the most popular approaches are based on the \gls{SQP}. As shown in~\cite[Ch.~18]{nocedal2006sequential},~\cite{schittkowski:1986:NLPQL} and reference there in, these methods provide excellent convergence properties and ensure a fast convergence to a (local) optimum of the original nonlinear problem. The \gls{SQP} is an iterative algorithm in which, during each iteration $p$, a candidate optimal trajectory $(\bs{x}^p,\bs{u}^p)$ is computed as the solution of a \gls{QP}, obtained by linearizing the constraints and approximating the cost via a quadratic funciton, see~\cite[Alg.~18.3]{nocedal2006sequential}. The linearization is performed with respect to the trajectory $(\tilde{\bs{x}}^{p-1},\tilde{\bs{u}}^{p-1})$ defined as
$$(\tilde{\bs{x}}^{p-1},\tilde{\bs{u}}^{p-1})=(\tilde{\bs{x}}^{p-2},\tilde{\bs{u}}^{p-2}) + \alpha^{p-1} ({\bs{x}}^{p-1},{\bs{u}}^{p-1}),$$
where $({\bs{x}}^{p-1},{\bs{u}}^{p-1})$ is the solution of the \gls{QP} solved at the previous iteration, and $\alpha^{p-1}$ can be computed for example via line search as in~\cite[Eq.~18.28]{nocedal2006sequential}.
If the candidate solution satisfies some convergence conditions then $(\bs{x^*},\bs{u^*})=(\bs{x}^p,\bs{u}^p)$; otherwise a new iteration is performed.

The sole nonlinearity in the constraints of \eqref{eq:opt_problem} is associated with the dynamics in \eqref{eq:SIS_dynamics_controlled}. The effectiveness of the \gls{NPI} depends on the predictive accuracy of the linearized dynamics. This difference is studied via numerical simulations in Section~\ref{sec:simulations}, where it is shown that the linearized dynamics allow to obtain a very high prediction accuracy for a sufficiently long prediction horizon. Thus, the used \gls{NLMPC} algorithm generates an effective solution for the control problem.
  
Finally, in Figure~\ref{fig:control_scheme} we depict the complete control scheme to solve the problem of designing optimal \gls{NPI} to control an \gls{SIS} type dynamics. Specifically, at each time instant $t$ an instance of \eqref{eq:opt_problem} is cast. The optimal trajectory  $(\bs{x^*},\bs{u^*})$ is computed over $\mc T$ via \gls{SQP}. Finally, a receding horizon approach is implemented by applying only \gls{NPI} associated to the first instant, i.e., $u^*(t)$, and then the loop starts again. This implementation allows us to minimise the prediction error inherently present in the prediction of the model.

%\begin{figure}[t]
%\centering
%\includegraphics[scale=0.5,width=0.7\columnwidth]{figure/scheme.pdf}
%\caption{Control scheme  for the design and implementation via receding horizon of optimal \gls{NPI} to flatten the pandemic curve.}\label{fig:scheme}
%\end{figure}

\begin{figure}
\centering
\begin{tikzpicture}[scale=1, transform shape]
\bXInput{A}

\bXStyleBloc{rounded corners,fill=black!10,text=blue}
\bXBloc[3]{controller}{\eqref{eq:opt_problem}}{A}
\bXLink[$\bs{\hat x}$]{A}{controller}
\begin{scriptsize}
        \bXLinkName[2.8]{controller}{NPIs design}
\end{scriptsize}
\bXStyleBloc{rounded corners,fill=blue!10,text=blue}
\bXBloc[5]{plant}{\eqref{eq:SIS_dynamics_controlled}}{controller}
\bXLink[$u^\ast(t)$]{controller}{plant}
\begin{scriptsize}
        \bXLinkName[2.8]{plant}{SIS dynamics}
\end{scriptsize}

\bXOutput[3]{Z}{plant}
\bXLink[$x(t)$]{plant}{Z}
\bXReturn{plant-Z}{controller}{}
\end{tikzpicture}
\caption{Control scheme  for the design and implementation via receding horizon of optimal \gls{NPI} to flatten the pandemic curve.}
\label{fig:control_scheme}
\end{figure}
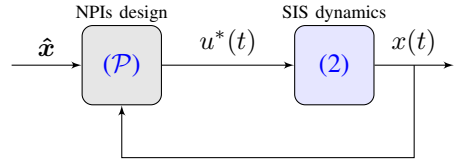

%The points that compose the optimal trajectory $(\bs{x^*},\bs{u^*})$ that solves \eqref{eq:opt_problem} may not  be stationary point of the controlled dynamics \eqref{eq:SIS_dynamics_controlled}. Nevertheless, if for some $k\in\mc T$ a point $(x(t),u(t))$ is part of $(\bs{x^*},\bs{u^*})$, then the ration between the local infection rate $(\overline \beta_i -v_i(t))$ and the recovery rate $\mu$ has to be greater than $1$ in every community.  
%\smallskip
%\begin{proposition}
%\label{prop:minimum_x}
%If the trajectory $(\bs{x^*},\bs{u^*})\in\Omega$ is a solution of \eqref{eq:opt_problem} and, for some $k\in\mc T$, $( x^*(t), u^*(t)))$ is a stationary point of \eqref{eq:SIS_dynamics_controlled}, then for every  $i\in\mc V$ the control input $v^*_i(t)$ satisfies
%\begin{equation}
%\label{eq:local control input ratio}
%\frac{(\overline \beta_i- v^*_i(t))}{\mu}\geq 1\:.
%\end{equation} \hfill \QEDopen
%\end{proposition}  
%\begin{proof}
%See Appendix.
%\end{proof}
%\smallskip
%The proposition above have an interesting implication, the disease-free equilibrium, i.e., $x(t)=\0$, is always unstable for the dynamics in \eqref{eq:SIS_dynamics_controlled} controlled by the optimal policies $u^*(t)$.

\section{Simulations}\label{sec:simulations}
\begin{figure*}[t]
    \includegraphics[width=\columnwidth]{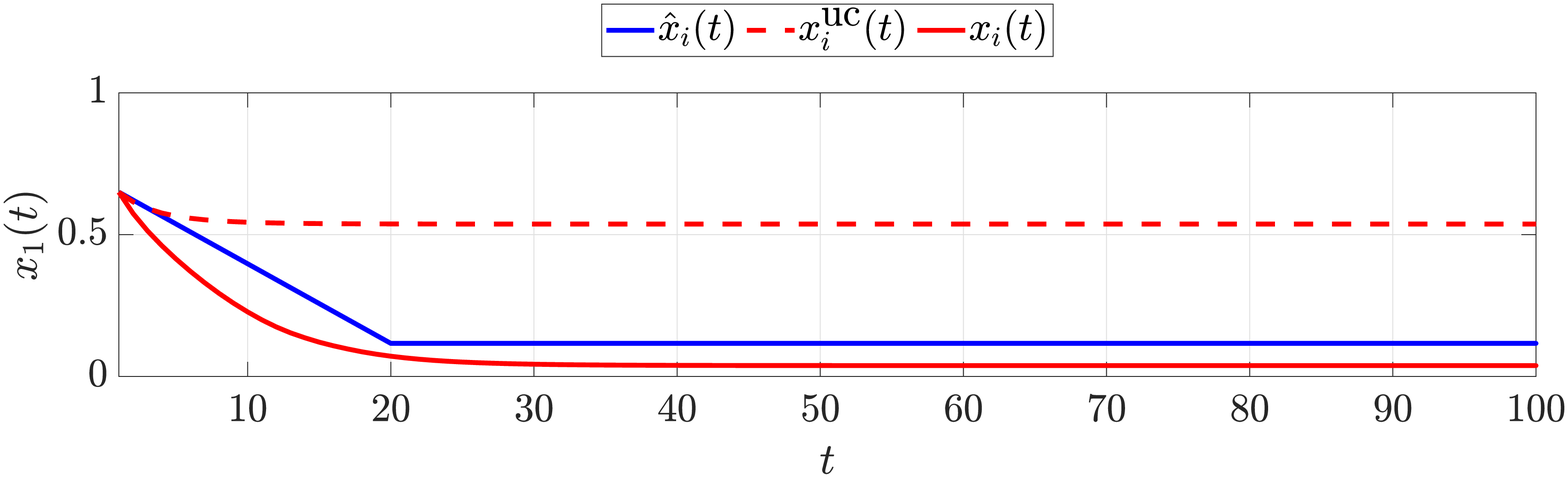}\hfill
    \includegraphics[width=\columnwidth]{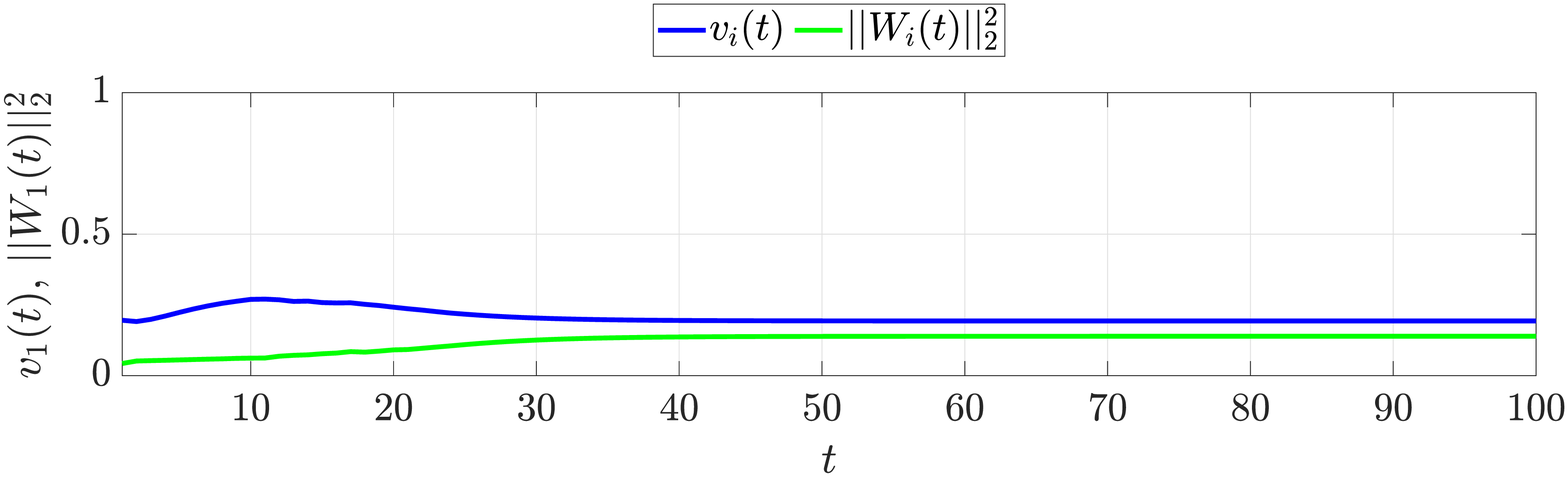}
    \\[\smallskipamount]
    \includegraphics[width=\columnwidth]{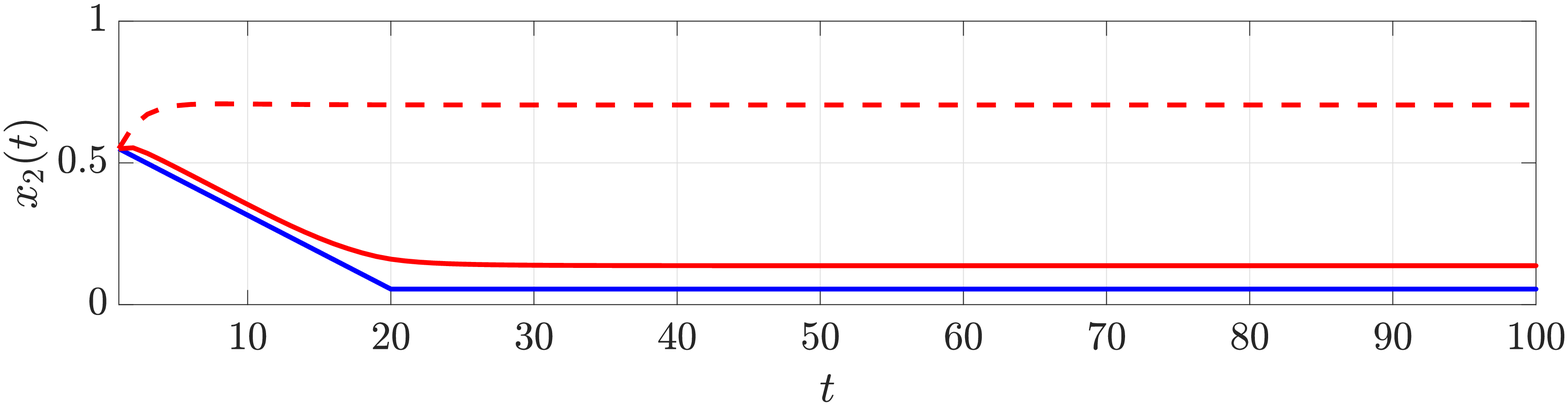} \hfill
    \includegraphics[width=\columnwidth]{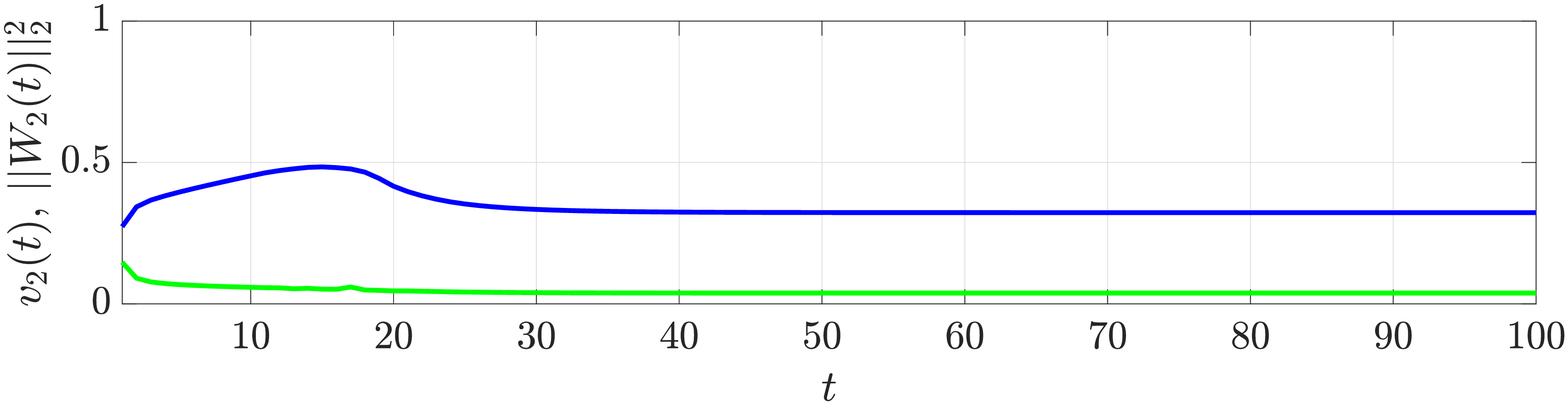}
    \\[\smallskipamount]
    \includegraphics[width=\columnwidth]{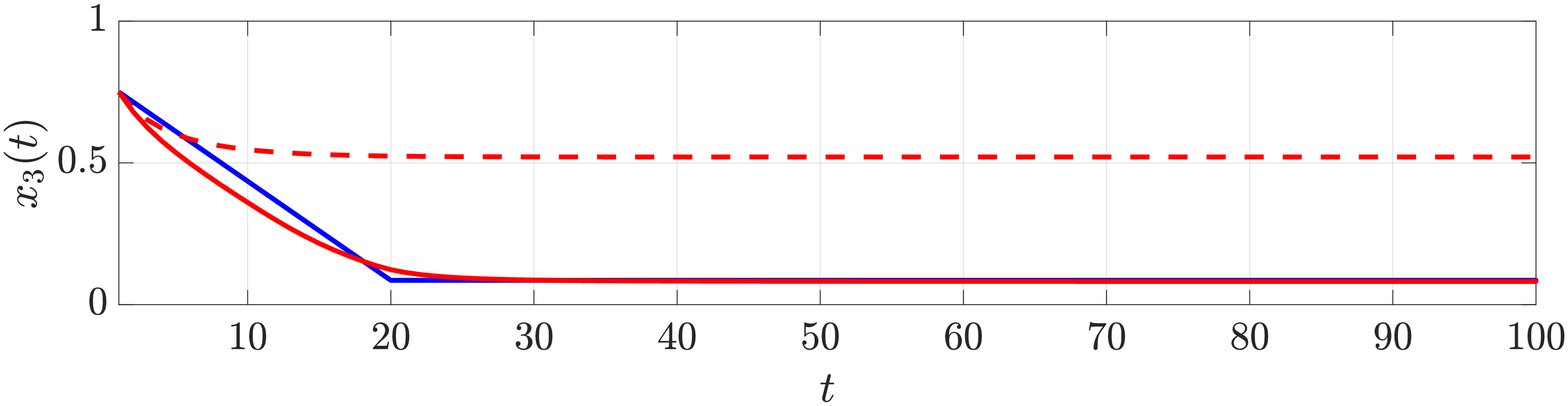}\hfill
    \includegraphics[width=\columnwidth]{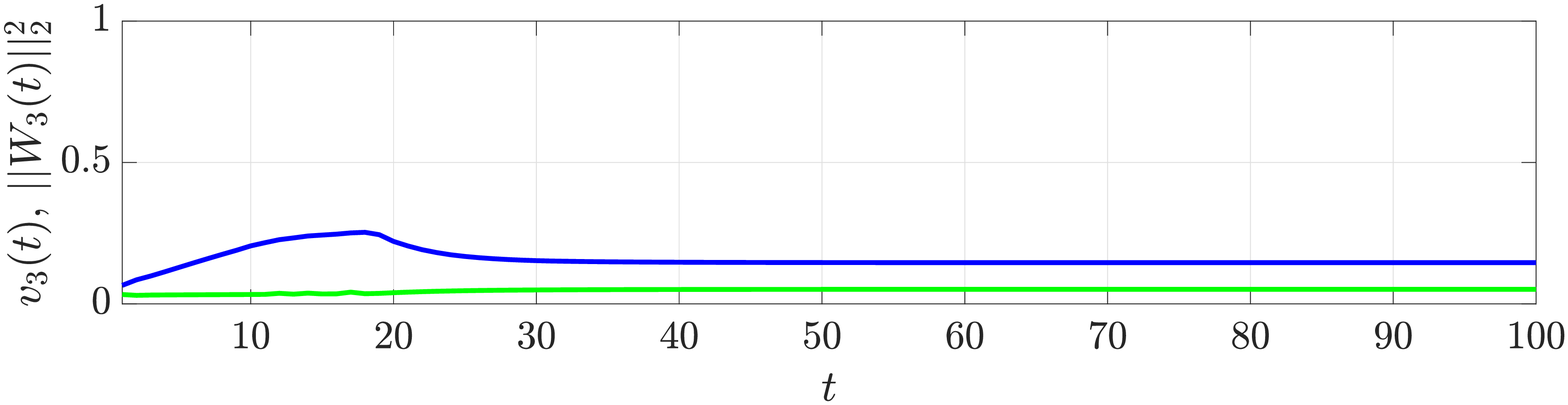}
    \\[\smallskipamount]
    \subfloat[]{\includegraphics[width=\columnwidth]{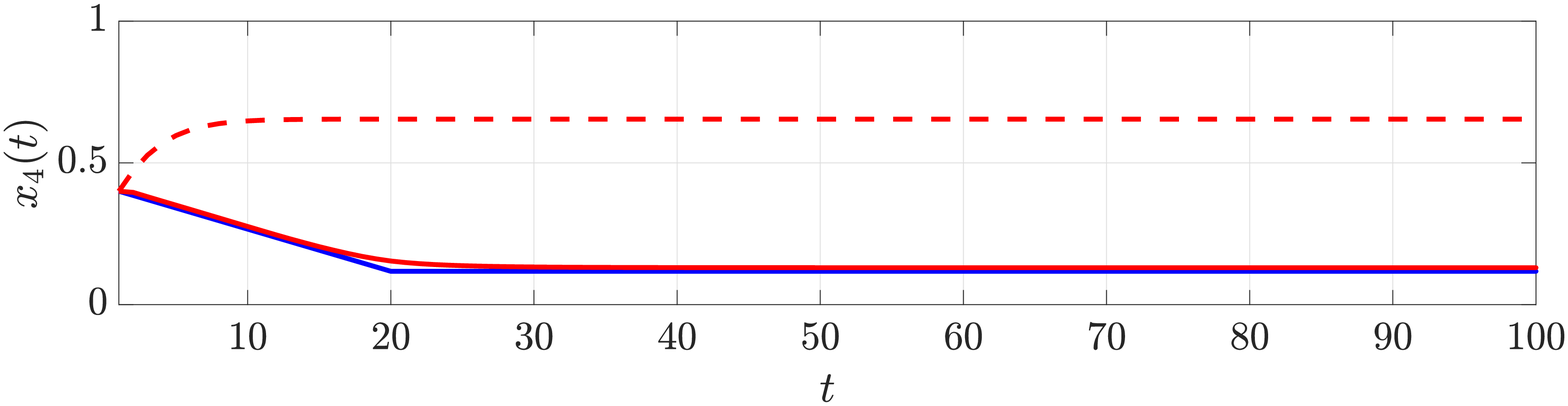}\label{fig:trajectory}}\hfill
    \subfloat[]{\includegraphics[width=\columnwidth]{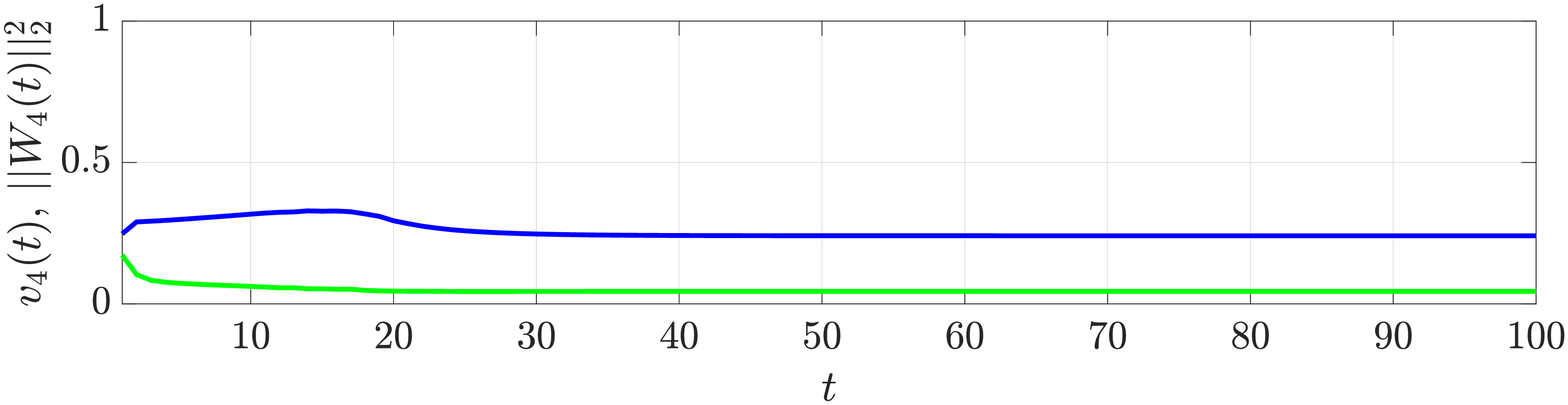}\label{fig:control}}
    \caption{In (a), we show the trajectories of the uncontrolled dynamics  $x^{\tup{uc}}_i(t)$ (red dashed curve), of the reference $\hat x_i(t)$ (blue solid curve), and of the controlled dynamics  $x_i(t)$ obtained via the control scheme in Figure~\ref{fig:control_scheme} (red solid curve) for all $i\in\mc V$. In (b), we plot the amount of endogenous (blue) and exogenous (green) control applied to every community $i\in\mc V$ to obtain $x_i(t)$, respectively $\beta_i(t)$ and $\lVert W_i(t) \rVert^2_2$. }\label{fig:trajectory_and_control}
\end{figure*}

In this section, we present several numerical simulations that validate  the procedure proposed to design optimal \gls{NPI} and provide insightful information on how different parameter choices (e.g., the weights on the control and the length of the prediction horizon) influence the final optimal policies.

\subsection{Reference tracking and reproduction number}
The simulations are performed over a randomly generated connected network with $N=4$  that is described by a weighted adjacency matrix with off-diagonal nonzero entries chosen from a normal distribution with mean $0.1$ and variance of $0.2$, and diagonal entries such that $\bar A\1=\1$, obtaining
\begin{equation}
\label{eq:A_bar}
\overline{A} = \begin{bmatrix}
0.7 &   0.17 &         0 &   0.13\\
    0.42 &    0.31 &    0.16 &    0.11\\
         0 &    0.12 &    0.88 &         0\\
    0.28 &    0.1 &         0 &   0.62
\end{bmatrix}.
\end{equation}
The uncontrolled infection rate $\overline \beta_i$ of each node is selected uniformly and randomly in the interval $[0.3,0.6]$, obtaining $\overline \beta = [    0.3,\, 0.59, \,0.3\,, 0.45 ]$.
The recovery rate is chosen to be $\mu=0.15$. So, the  uncontrolled dynamics  converge over time to an endemic equilibrium~\cite{Pare2020}.  Note that, since $\mu$ is the inverse of the time for recovering, if one time-step corresponds to a day, then this setup may represent a disease like seasonal flu, for which (approximately) $6$ days are  enough to recover. On the other hand, if one time-step is interpreted as one week, the setup may capture diseases with longer recovery time, e.g., gonorrhea. The initial fraction of infected for each population is set as $\bs{x}(0)=[0.65,\, 0.55,\, 0.75,\, 0.40]$, that is, a scenario of an endemic disease. We denote the trajectory of the fraction of infected individuals in the uncontrolled dynamics \eqref{eq:SIS_dynamics} by $\bs{x}^{\tup{uc}}(t)$. The desired reference trajectory $\hat x_i(t)$ of each community is assumed to start close to $x_i(0)$ and linearly decrease until it reaches its terminal value at $t=20$ that is $\bs{\hat x}(20) = [0.1168,\, 0.0548,\, 0.0856,\, 0.1175]$. The length of the prediction horizon is set equal to $T_h=10$ and we will discuss its optimal value in Section~\ref{sec:optimalT}. The weights $q(t)$ and $S(t)$ are chosen constant in time, uniform across the different nodes, and equal to $q_i(t)=1$, for all $i\in\mathcal V$, while $S$ is so that all off-diagonal terms are equal to $0$, the diagonal terms that correspond to entries of $v$ are equal to $0.2$ and those that correspond  to entries of $W$ are equal to $0.05$. Note that the weights are such that applying the same amount of exogenous and endogenous control respectively have similar costs.

In Figure~\ref{fig:trajectory}, we present the trajectory of $x_i(t)$,  obtained from \eqref{eq:SIS_dynamics_controlled}, by following the control scheme in Figure~\ref{fig:control_scheme} (red solid curves), compared with the value of the uncontrolled SIS dynamics (red dashed curves), and the reference (blue solid curve). As expected from having a higher weight on the healthcare cost, the value of $x_i(t)$ remains relatively close to the desired one $\hat x_i(t)$. From Figure~\ref{fig:control}, it is clear that, in this scenario, acting on local restrictions as lock-down and social distancing is more effective than implementing travel bans. This is consistent with empirical observations during the ongoing COVID-19 pandemic, suggesting that travel bans are more effective in the early stages of an epidemic outbreak, i.e., when $\bs x(0)$ is close it $\0$~\cite{Parino2021}. In the first instances, the control action is more intense due to the  decreasing reference and, after $t=20$, it is relaxed. This can also be assessed via the ratio $\beta_i/\mu$ depicted in Figure~\ref{fig:R_t}. In fact, during the initial time instants, all this ratios decrease  reaching values lower than $1$.  This suggests that, by adopting the \gls{NPI} implemented in the first phase for the entire duration of the simulation, the trajectory would converge to the disease-free equilibrium, at the cost of dramatically increasing the social and economical impact of the control. 

\begin{figure}[th]
\centering    
    \subfloat[]{ \includegraphics[width=0.45\columnwidth]{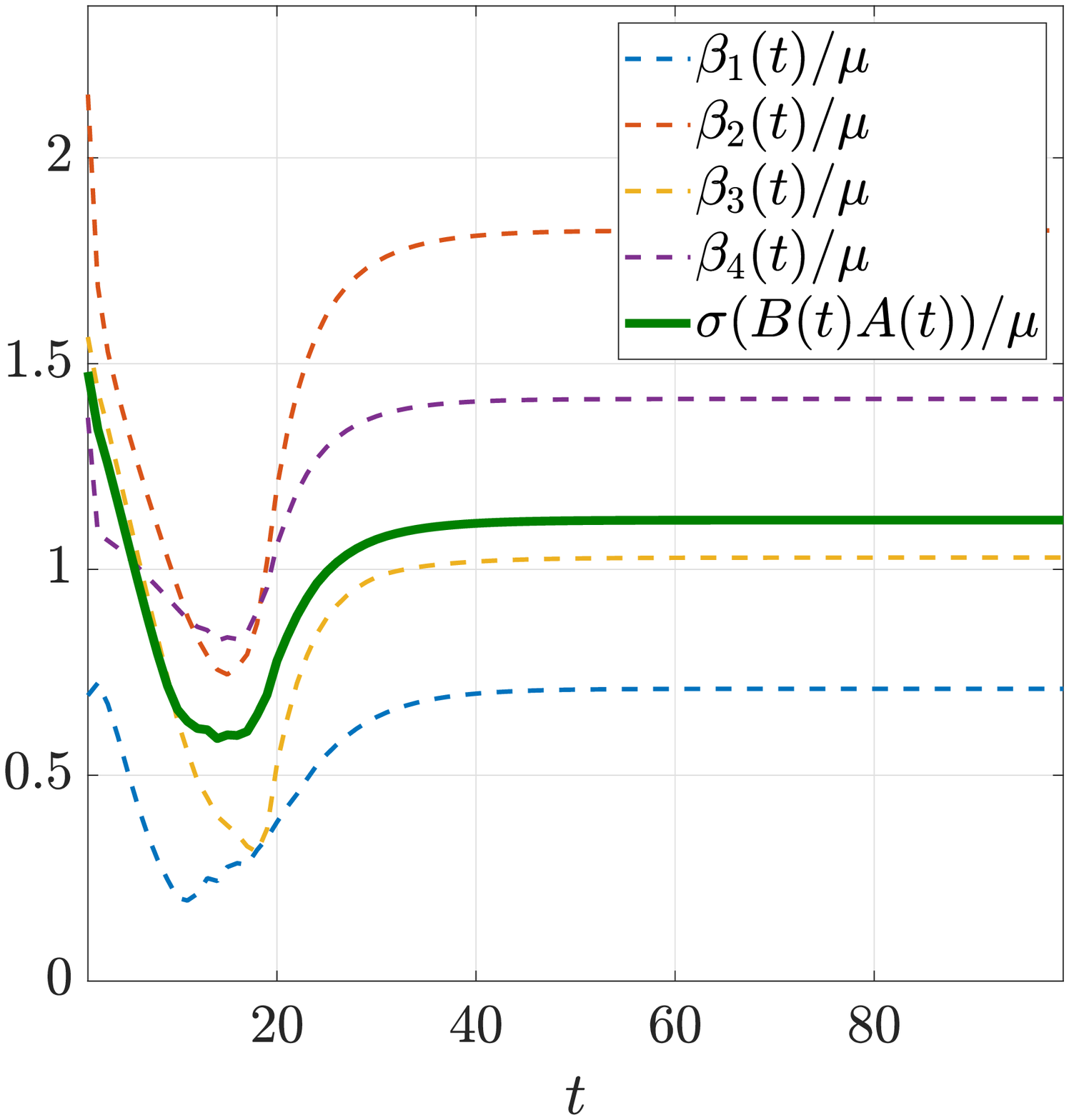}\label{fig:R_t}} 
 \subfloat[]{ \includegraphics[width=0.47\columnwidth]{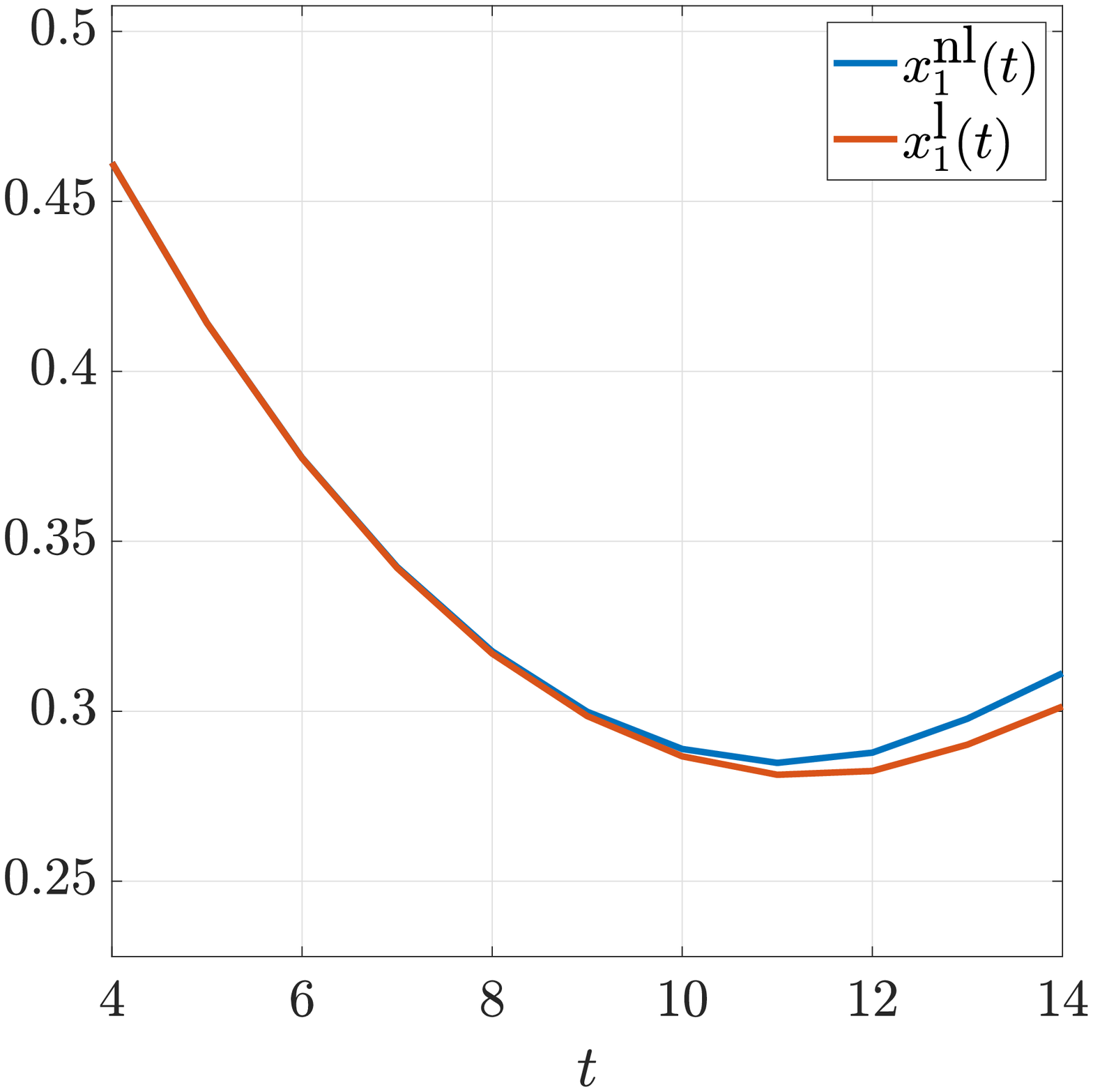}\label{fig:nl_l}}     \caption{In (a), we show the evolution of the ratio $\beta_i(t)/\mu$ for all the communities and the  value of  $\sigma(B(t)A(t))/\mu$. In (b), we compare the trajectory predicted by the linearized SIS model, i.e., $x^{\tup{l}}(t)$ (red) and the one obtained from the open loop controlled nonlinear dynamics $x^{\tup{nl}}(t)$ (blue).}\label{fig:R_t_and_nl_l}
\end{figure}

In Figure~\ref{fig:R_t}, we also report the evolution of the singular value $\sigma(B(t)A(t))/\mu$ where $B(t)\coloneqq \diag(\beta(t))$, which is known to determine the behavior of the epidemic process~\cite{Mei2017,Pare2020}.  It is also strictly related to the effective reproduction number $R_t$, which can be obtained by multiplying such a quantity by the fraction of susceptible individuals in the population. As expected, the value at which  $\sigma(B(t)A(t))/\mu$  settles is above $1$, since the terminal value of the reference $\bs{\hat x}(t)$ is greater than the disease-free equilibrium.  It is interesting to notice that the optimal control $W(t)$  tends to isolate the communities by increasing the weight of each self-loop and decreasing the off-diagonal entries. Analyzing more complex graphs may lead to the emergence of different patterns in the exogenous control actions, and this is left for future research.

In Figure~\ref{fig:nl_l}, the accuracy of the linearized model used to solve \eqref{eq:opt_problem} at time $t=4$ is shown. By solving the optimization problem, we obtain the trajectories $\bs{x}^*(t)$ and $\bs{u}^*(t)$ for $t\in\mc T=\{4,\dots,14\}$, which we denote by $\bs{x}^{\tup{l}}(t)$. We compute $\bs{x}^{\tup{nl}}(t)$ by applying the whole $\bs{u}^*$ to \eqref{eq:SIS_dynamics_controlled} and letting it evolve in an open loop. The estimation of the nonlinear behavior of the dynamics is almost perfect during the first six time instants of $\mc T$, while it becomes less accurate during the second half. An even longer prediction horizon would exacerbate this difference affecting the performance, as we show in the next section. 

Overall, the policies computed by the proposed control scheme generate noteworthy effects in controlling the epidemics by achieving the two main objectives of the proposed approach, i.e., an acceptable level of infected and at a low social and economical price.
Nevertheless, it is evident that the optimal solution to the problem \eqref{eq:opt_problem} highly depends on the choice of the weights $q_i(t)$ and $S_i(t)$ in the cost function $J(\bs x,\bs u)$. To better understand the effect of the endogenous and exogenous measures on the (steady-state) health state of the overall population, we have performed a sensitivity analysis of the weights applied to the control action.  For all $i$ and $t$, we set $q_i(t)=1$ and $ S_i(t)$ equal to the block diagonal matrix that has on the diagonal $s_v$ and $\frac{s_w}{N} I$, which are the weights on $v_i(t)$ and $W_i(t)$, respectively. In order to compare the performance obtained with different weights, we introduce the following index:
\begin{equation}\label{eq:pi}
\pi := \frac{||{\bs x}(\infty)-\bs{\hat x}(\infty)||}{||{\bs x}(\infty)-{\bs x}^{\tup{uc}}(\infty)||}\,.
\end{equation}
As shown in Figure \ref{fig:weights}, we obtain larger values of $\pi$ (i.e., the system converges close to the endemic equilibrium $\bs{x}^{\tup{uc}}$) when the weights of the control actions are large, while  smaller values of $\pi$ (i.e., the system converges close to the desired reference $\bs{\hat x}$) arise  if $s_v$ and $s_w$ are small. %This can be observed in Figure \ref{fig:weights}. 
Furthermore, from Figure \ref{fig:weights} one can observe that changing the weight $s_v$ makes the index $\pi$ change more significantly than changing the weight of the exogenous actions. Then, we can conclude that the endogenous actions are more effective than the exogenous ones. However, we can also observe that when the cost of implementing for instance lock-downs is very high (i.e., larger values of $s_v$), putting in place travel bans is very beneficial.
\begin{figure}[t]
\centering     
\includegraphics[width=1.1\columnwidth]{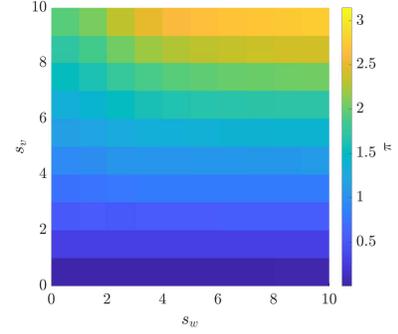}
       \caption{Results of the sensitivity analysis with respect to the endogenous ($s_v$) and exogenous ($s_w$) control costs. The heat-map represents the value of the index $\pi$, defined in \eqref{eq:pi}, for the different values of the control cost parameters. }\label{fig:weights}
\end{figure}

\subsection{Myopic vs predictive policies}\label{sec:optimalT}

Next, we consider the problem of choosing the optimal length for the prediction horizon. This is a critical choice that a policy maker has to perform. In fact, if $T_h$ is too small the policies will be myopic failing to prepare in time for the future evolution of the epidemics. On the other hand, longer prediction horizons will require an increased computational effort. Moreover, due to the difference between the real dynamics and the ones used in \eqref{eq:opt_problem}, a long prediction horizon may lead to inaccurate estimation and consequently to  incorrect precautionary policies, see Figure~\ref{fig:nl_l}. Therefore, we believe that there is an optimal length for the prediction horizon. 

\begin{figure}[t]
\centering    
  \subfloat[]{  \includegraphics[width=0.48\columnwidth]{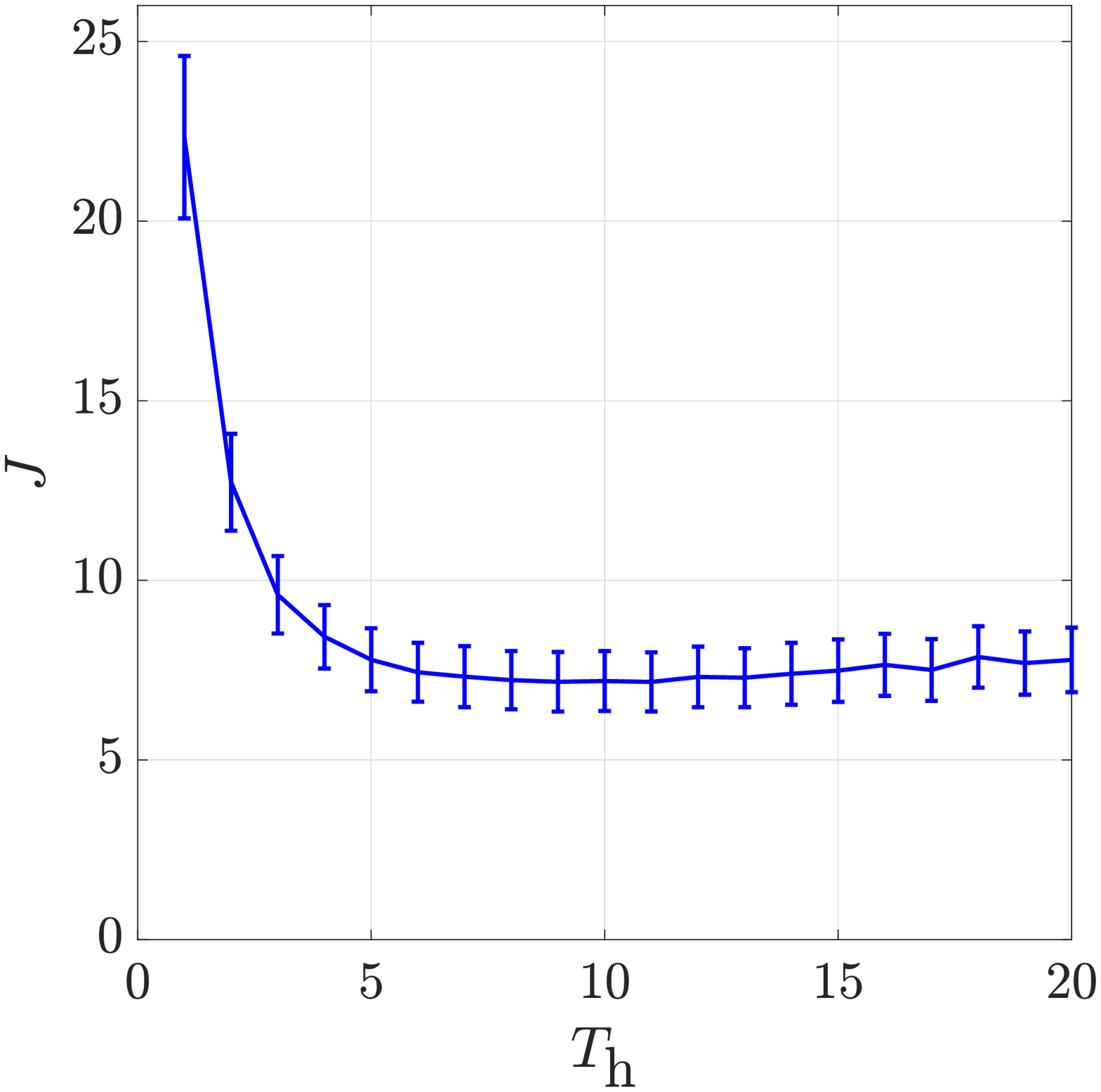}\label{fig:cost}} \hfill 
  \subfloat[]{\includegraphics[width=0.48\columnwidth]{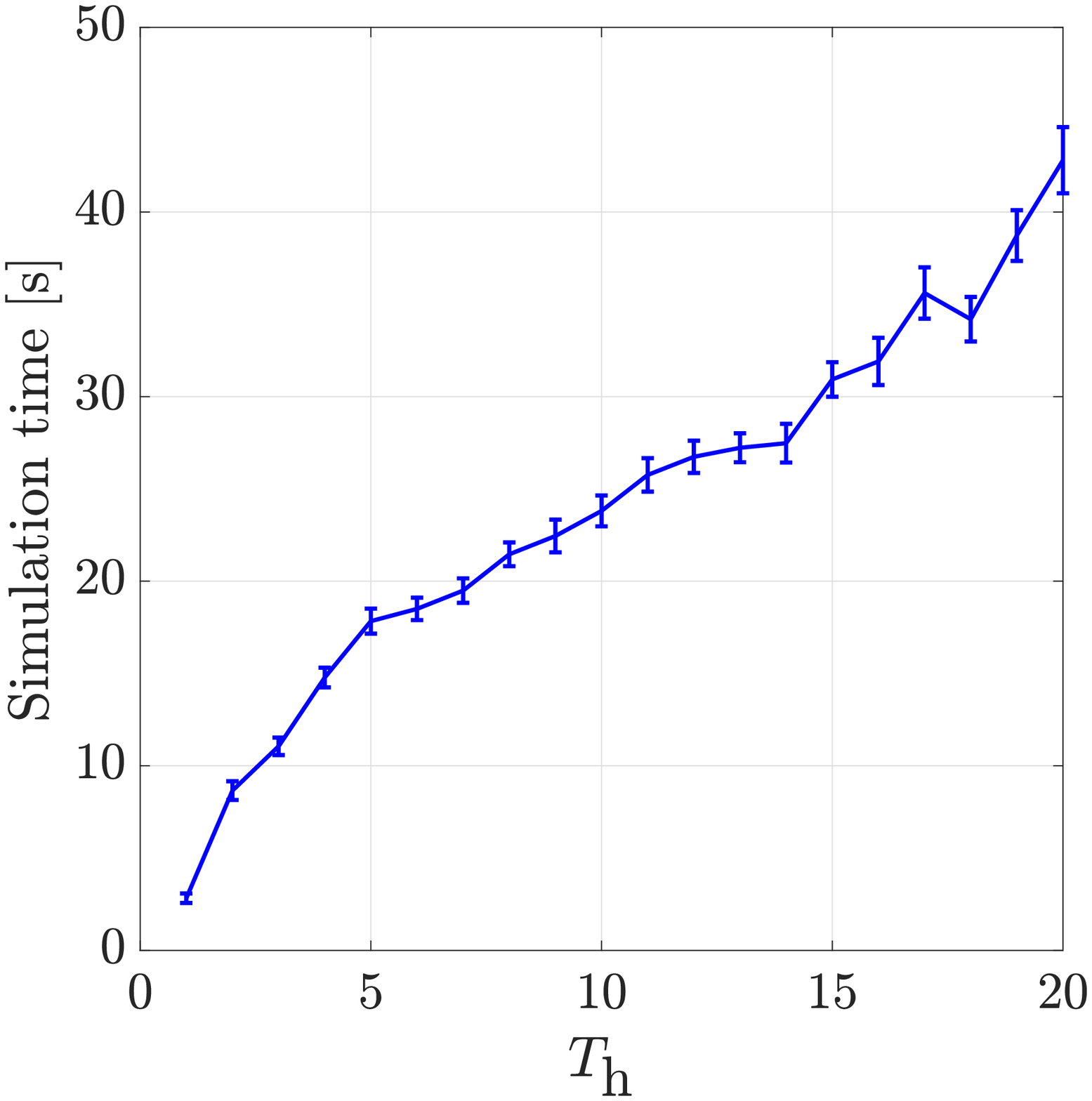} \label{fig:time}}
       \caption{In (a), we illustrate the value of $J$, i.e., the sum of the cost over the whole simulations, as the prediction horizon $T_{\tup h}$ grows. In (b), we show the computational time for one simulation for growing values of $T_{\tup h}$. In both figures, the vertical bars represent $95\%$ confidence intervals  computed over $50$ independent realizations of the random variables $\overline{A}$, $\overline{\beta}$, and $\bs{\hat{x}}$.}\label{fig:cost_and_time}
\end{figure}

To validate this claim via simulations, we generate $50$ different scenarios in which $\overline{A}$, $\overline{\beta}$ and ${\bs \hat x}$ are randomly chosen as described in the previous section, while the rest of the coefficients of the problem do not change. As index of the performance we consider the sum over the whole simulation of the cost actually experienced by the communities, we denote it simply by $J$. Figure~\ref{fig:cost} clearly depicts a Pareto front in which, after an initial reduction of $J$ due to the growing $T_{\tup h}$, there is a diminishing return. Moreover, for even greater values of $T_{\tup h}$ the performance starts to slowly get worse. The computational time of each one of the simulation grows linearly with the value of $T_{\tup{h}}$, as shown in Figure~\ref{fig:time}, this can be a serious issue in the case of large scale networks. Therefore, the optimal value of $T_{\tup{h}}$ for the simulated cases is in the range $5-7$. In fact, it is large enough to benefit from the initial steep increment in terms of performance but not that large to make the estimations less accurate.

\section{Conclusion and outlook}\label{sec:conclusions}
In this paper, the problem of designing optimal \gls{NPI} to ensure containing an epidemic on a network while minimizing the costs due to restrictive policies has been cast as a nonlinear constrained optimization problem over a prediction horizon. We have proposed a solution via a linearization technique and a receding horizon strategy, in which the real  evolution of the disease is used as a feedback to cast a new instance of the optimization problem and compute the upcoming  policies that have to be implemented.

The performance of the proposed control scheme has been demonstrated via preliminary numerical simulations. Specifically, the good accuracy of the linearized dynamics for sufficiently long time horizons guarantees accurate farsighted forecasts, avoiding thus the risks of relying on myopic policies. Our numerical findings have shown nontrivial solutions for the optimal control strategies, depending on the costs associated with the implementation of the different types of \gls{NPI} and on the requirements on the desired evolution of the epidemic outbreak. We finally would like to stress that, while the current formulation has been developed using an \gls{SIS} epidemic model, the same control scheme can be extended to more complex epidemic models, such as those used to capture the features of the ongoing COVID-19 pandemic~\cite{Giordano2020}. Such an extension, and the analysis of the proposed control scheme in real-world scenarios is an important direction for the future research.

%\balance
\bibliographystyle{IEEEtran}
\bibliography{ref_epi}

%\appendix
%\subsection{Proof of Proposition~\ref{prop:minimum_x}}
%We prove the statement by contradiction. Assume there exists a solution $(\bs{\overline x}, \bs{\overline u})$ of \eqref{eq:opt_problem} such that there exists at least one $i\in\mc V$ and $k\in\mc T$ that satisfies $\overline x_i(t)\coloneqq \hat x_i(t)-\varepsilon < \hat x_i(t)$ for some  $\varepsilon>0$, and that it is also a stationary point the local dynamics in  \eqref{eq:SIS_dynamics_controlled}, i.e.
%\begin{equation*}\label{eq:SIS_dyn_bar}
%\overline x_i = (1-\mu) \overline  x_i + (1- \overline  x_i) (\beta_i-\overline v_i(t))  \sum_{j=1}^N  (\overline A_{ij}-\overline W_{ij}(t))x_j(t)\,,
%\end{equation*}
%where $\overline{v}_i(t)$ and $\{\overline W_{ij}\}_{i,j\in\mc V}$ are the control inputs. Without loss of generality, we denote $\overline{v}_i(t)\coloneqq \hat v_i(t)+ \eta $. 
%
%We denote by $( \bs{\tilde x}, \bs{\tilde  u})$ the point $(\bs{\overline x}, \bs{\overline u})$ in which the components $\overline x_i(t)$ and $\overline{v}_i(t)$ are replaced by $\hat x_i(t)$ and $\hat{v}_i(t)$. If we prove that there exists a value of $\eta > 0$ such that $( \bs{\tilde x}, \bs{\tilde  u})\in \Omega$, then a contradiction arises and the proof is completed.
%
%\hfill\QED

\end{document}